\newtheorem{fact}{Lemma}
\begin{document}

\title{Collapse in the nonlocal nonlinear Schr\"{o}dinger equation}
\author{F.~Maucher}
\address{Max Planck Institute for the Physics of Complex Systems, 01187, Dresden, Germany}
\ead{fabian@pks.mpg.de}
\author{W.~Krolikowski}
\address{Laser Physics Centre, Research School of Physics and Engineering,
Australian National University, Canberra, ACT 0200, Australia}
\author{S.~Skupin}
\address{Friedrich Schiller University, Institute of Condensed Matter Theory
and Optics, 07743 Jena, Germany}
\address{Max Planck Institute for the Physics of Complex Systems, 01187, Dresden, Germany}

\begin{abstract}
We discuss spatial dynamics and collapse scenarios of localized waves governed by the nonlinear Schr\"{o}dinger
equation with nonlocal nonlinearity.
Firstly, we prove that for arbitrary nonsingular attractive nonlocal nonlinear interaction in arbitrary dimension
collapse does not occur. Then we study in detail the effect of singular nonlocal kernels in arbitrary dimension using
both, Lyapunoff's method and virial identities.
We find that in the one-dimensional case, i.e. for $n=1$, collapse cannot happen for nonlocal nonlinearity. On the other hand, for spatial dimension $n\geq2$ and singular kernel $\sim 1/r^\alpha$, no collapse takes place if  $\alpha<2$, whereas collapse is possible if  $\alpha\ge2$. 
Self-similar solutions allow us to find an expression for the critical distance (or time) at which collapse should occur in the particular case of $\sim 1/r^2$ kernels.
Moreover, different evolution scenarios for the three dimensional physically relevant case of Bose-Einstein condensates 
are studied numerically for both, the groundstate soliton and higher order toroidal states with and without an additional local repulsive nonlinear interaction. In particular, we  show that presence of local repulsive nonlinearity can prevent collapse in those cases.
\end{abstract}
\pacs{42.65.Jx,03.75.Kk,52.35.Mw}

\section{Introduction}

 Spatial and/or temporal dynamics of waves in many nonlinear systems are often governed by the famous nonlinear Schr{\"o}dinger  (NLS) equation~\cite{Sulem:1999}. This universal equation appears in many diverse physical settings including, for instance, nonlinear optics~\cite{Agrawal:2006}, Bose-Einstein condensate~\cite{BEC_review, BEC_review2,Dalfovo:1999},   water waves~\cite{Zakharov:1968} and is even discussed in the context of rogue waves~\cite{Akhmediev:2009}.   A particular solution to this equation - a soliton, represents a nonlinear localized bound state
 that does not change its shape when propagating or evolving in time and/or space~\cite{Agrawal_Kivshar:2003}. The existence of solitons  is brought about by the balance between the natural effect of  dispersion or diffraction which tends to spread the wave and the self-focusing, provided by  the  nonlinear response  of the
 medium induced by the wave itself. In case of optics the nonlinearity  represents the self-induced refractive  index change, while, e.g. in Bose-Einstein-Condensates (BEC) it is just the nonlinear potential of bosonic interaction. In both cases the presence of the nonlinearity may lead to self-focusing and consequently, self-trapping  of light or condensate. When this nonlinear potential induced by the wave itself is strong enough  for the self-focusing to dominate over  the effects of dispersion/diffraction, the amplitude of the wave may grow to infinity (blow-up) within finite time or distance. This phenomenon is often referred to as a collapse.
 Collapse is  a well-known effect associated with dynamics of finite waves in nonlinear systems. It  has been studied in several physical settings including, e.g. plasma waves and nonlinear optics~\cite{berge:PhysRep:1998}, BEC~\cite{Hulet:2000}, as well as collapse  of very heavy stars into black holes~\cite{blackholes}. In the majority of  works devoted to collapse the nonlinear potential induced by the wave is of a local character, i.e. the system response in the particular spatial location is determined solely by the wave amplitude (or intensity) in the same location.

 Here we will discuss the dynamics of waves and collapse  in systems exhibiting a spatially  nonlocal nonlinear response. 
 In nonlocal media the nonlinear response
 at the given point depends on the wave intensity in an extended
 neighborhood of this point. The extent of this neighborhood in comparison to the
 spatial scale of the localized wave determines the degree of nonlocality \cite{wkob01}.
 It appears that spatial nonlocality is a generic property of many nonlinear systems
 and  is  often a result of an underlying transport process, such as
ballistic transport of atoms \cite{Saffman:07} or  atomic diffusion \cite{Suter:1993} in atomic vapors, heat transfer in plasma \cite{Litvak:1975:sjpp}  and thermal media \cite{Kleiman73,PhysRevLett.99.043903}, or
drift of electric charges  as in photorefractive crystals \cite{Zozulya96}. 
It can also be induced by  a long-range particle
 interaction as in  nematic liquid crystals  which exhibit orientational
 nonlocal nonlinearity \cite{Assanto:2003:jqe} or in colloidal suspensions \cite{PhysRevLett.97.123903}.
 Recent enhanced interest in nonlocal nonlinearities has been stimulated by the research on  
 Bose Einstein condensates   where the  inter-particle nonlinear interaction 
 potential is inherently nonlocal \cite{Dalfovo:1999,ParSalRea98,Lahaye:09:prp}.

In contrast to local attractive nonlinearity, which leads to collapse in more than one dimension, nonlocal nonlinearities may actually prevent collapse.
 Turitsyn was the first to prove analytically the absence of collapse in the nonlocal NLS equation for three different singular nonlocal kernels in \cite{Turitsyn:85:tmf}, including  the singular gravity-like $1/r$-kernel.  In a later work Perez-Garcia {\em et al.}  investigated the collapse dynamics of the  nonlocal Schr\"{o}dinger equation using variational techniques and numerical simulations \cite{Perez:pre:62:4300}. They showed in particular, that for the nonsingular nonlocal nonlinearity  collapse is always arrested.
Absence of collapse for a singular $1/r$-response and form stability of higher order states, such as dipole and torus, have been recently verified numerically in \cite{Maucher:2010:PRA}.
More general treatment of the NLS and ensuing dynamics and collapse is presented in \cite{Cazenave:2003,Froehlich:2003}.
As far as collapse in the general case of the nonlocal Schr\"{o}dinger equation is concerned, Bang {\em et al.} showed rigorously that  nonlocality arrests the collapse as long as the Fourier spectrum of the nonlocal nonlinear potential is absolute integrable \cite{Krolikowski:job:6:5}. 
Rigorous recent studies of collapse in three dimensions in case of a particular singular form  of the nonlocal nonlinear potential have been given in \cite{Lushnikov:PRA:051601,chen:physD:227:2,Lushnikov:PRA:023615}.
Here, we will generalize those previous findings to arbitrary number of transverse dimensions, and show that for the nonlocal nonlinearity represented by singular kernels $\sim 1/r^\alpha$, $\alpha=2$
there is a sharp threshold for collapse. By means of (3+1)-dimensional numerical simulations we elucidate different evolution scenarios, and show that an additional local repulsive nonlinear interaction
may prevent collapse.

\section{Model}

In this paper we will consider the self-trapped evolution of the wave function $\psi(x,t)$ described by the nonlocal nonlinear Schr{\"o}dinger equation (NNLS), given by
\begin{equation}\label{eq:nls}
i \partial_t \psi(x,t) = -\Delta \psi(x,t) - \left(\int A\left(x-y\right)
\left|\psi\left(y,t\right)\right|^2\mathrm{d}y \right) \psi(x,t),
\end{equation}
where $x,y\in \mathbf{R}^n$ with $n$ being the spatial dimension. $A(x)$ is the so called   nonlocal response or kernel which represents the nonlocal character of the nonlinearity of the medium. Its width determines the degree of nonlocality~\cite{wkob01}. In particular, for  $A(x)\sim\delta(x)$ the above equation describes the standard Kerr-type local medium. While this is only a phenomenological model it nevertheless describes very well general properties of nonlocal media. Moreover, for certain forms of the nonlocal response function this model represents actual experimentally realizable physical systems.
This is the case of  nematic liquid crystals~\cite{Assanto:2003:jqe, Conti:2003:prl} or thermal nonlinearity in liquids~\cite{PhysRevLett.99.043903} and plasma~\cite{Litvak:1975:sjpp} where, in one dimensional case $A(x)\sim \mathrm{exp}(-|x|)$ and in two dimensions $A(x)$ is proportional to the modified Bessel function of the second kind~\cite{Skupin:pre:73:066603}.
In addition,  Eq.~(\ref{eq:nls}) (also known as  Gross-Pitaevskii equation) governs the spatio-temporal dynamics of  Bose-Einstein condensates where the analytical form of the nonlocal Kernel is a consequence of the dipolar character of the   long-range particle-particle interaction \cite{Lahaye:09:prp,Goral:05,Griesmaier:05,Pedri:05}. While, in general, the nonlocal interaction potential in condensates is of rather complex and anisotropic character, under certain experimental conditions it may acquire ``gravity-like'' $1/r$ form~\cite{Odell:00}.

\section{Analytical estimates for possible evolution scenarios}

In a vivid picture, collapse refers to a phenomenon when the amplitude of the wave function $\psi$ goes to infinity within finite time or distance (for stationary processes).  However, in the mathematical context collapse is usually defined by means of a diverging gradient norm $\|\nabla \psi\|_2$  (see \ref{notations} for notations). For practical considerations, this is essentially the same, because a wave function maintaining a
finite $L^2$ norm, $\|\psi\|_2$, attains infinite amplitude when its gradient norm $\|\nabla \psi\|_2$ diverges~\cite{Glassey:1977}. 

\subsection{Lyapunoff methods}
In this subsection, we will use Lyapunoff's method \cite{Salle:1961}, to discuss  collapse scenarios. The idea is, roughly speaking, to show that the Hamiltonian
\begin{equation}
\mathcal{H} = \iint \left| \nabla \psi \right|^2 \mathrm{d}x\mathrm{d}y
- \frac{1}{2}\iint \left|\psi\left(x\right)\right|^2  \\
 A\left(x-y\right)
\left|\psi\left(y\right)\right|^2\mathrm{d}x\mathrm{d}y
\end{equation}
for Eq.~(\ref{eq:nls}) is bounded from below. Then, since in our case the $L^2$ norm $\|\psi\|_2$ and the Hamiltonian $\mathcal{H}$ are conserved quantities, it is possible to show that
the gradient norm $\|\nabla \psi\|_2$ is bounded from above, which implies that it cannot diverge.
Moreover, we can expect the existence of at least one stable soliton solution when the Hamiltonian is bounded~(see, e.g., \cite{Kuznetsov:1986}).

We can decompose a general kernel $A=A^+ + A^-$, $A^+\geq0$ and $A^-\leq 0$. Then, the  kernels $A^+$ and $A^-$ represent  attraction (in BEC) or focusing (in nonlinear optics) and repulsion or defocussing, respectively.
Since the amplitude of the wave function has to become infinity at a certain point, one can expect that a singular kernel $A^+$ promotes collapse.
Using Young's convolution theorem, one can immediately infer the following  related fact.
Without loss of generality, we restrict our following considerations to solely attractive kernels $A=A^+$.

\begin{fact}
Let $A\in L^\infty(\mathbf{R}^n)$, $\psi(t=0,x)\in H^1(\mathbf{R}^n)$. Then no collapse occurs at all times, since $\|\nabla \psi\|_2^2 $ is bounded from above.
\end{fact}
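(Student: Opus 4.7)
The plan is to exploit the two conservation laws of the NNLS flow, namely the mass $\|\psi\|_2$ and the Hamiltonian $\mathcal H$, together with the $L^\infty$ hypothesis on $A$, to bound the potential energy by a time-independent quantity and thereby control $\|\nabla\psi\|_2^2$ uniformly in $t$.

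First I would estimate the interaction integral pointwise. Since $|A(x-y)|\le\|A\|_\infty$ almost everywhere, Fubini gives
\begin{equation*}
\left|\iint |\psi(x,t)|^2\, A(x-y)\, |\psi(y,t)|^2\, \mathrm{d}x\,\mathrm{d}y\right|
\le \|A\|_\infty \left(\int |\psi(x,t)|^2 \mathrm{d}x\right)^{\!2}
= \|A\|_\infty\, \|\psi(t)\|_2^4 .
\end{equation*}
This is nothing more than the $(L^\infty,L^1)\to L^\infty$ case of Young's convolution inequality applied to $A\ast|\psi|^2$ and tested against $|\psi|^2\in L^1$, which is exactly the step alluded to in the remark preceding the statement.

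Next I would rearrange the Hamiltonian to isolate the kinetic term, writing
\begin{equation*}
\|\nabla\psi(t)\|_2^{\,2}
= \mathcal H[\psi(t)] + \tfrac{1}{2}\iint |\psi(x,t)|^2\, A(x-y)\, |\psi(y,t)|^2\, \mathrm{d}x\,\mathrm{d}y .
\end{equation*}
Since $\psi_0\in H^1(\mathbf{R}^n)$, standard NNLS well-posedness theory guarantees that $\mathcal H$ and $\|\psi\|_2$ are conserved along the flow, so combining the previous two displays yields
\begin{equation*}
\|\nabla\psi(t)\|_2^{\,2} \le \mathcal H[\psi_0] + \tfrac{1}{2}\|A\|_\infty\, \|\psi_0\|_2^{\,4},
\end{equation*}
a finite constant independent of $t$. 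Hence $\|\nabla\psi(t)\|_2$ cannot diverge in finite time, which is the stated no-collapse conclusion.

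I do not expect a genuine obstacle: the $L^\infty$ bound on $A$ decouples the $x$- and $y$-integrations and reduces the potential energy to a quantity controlled purely by the conserved mass, so the argument is one quantitative line plus the conservation laws. The only mild care needed is to invoke conservation of $\mathcal H$ and $\|\psi\|_2$ at the $H^1$ regularity of the initial data, which is standard given the boundedness of the kernel; the result is uniform in the spatial dimension $n$ and requires no structural property of $A$ beyond being essentially bounded.
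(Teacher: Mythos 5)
Your proof is correct and follows essentially the same route as the paper: bound the interaction term by $\|A\|_\infty\|\psi\|_2^4$ via the $(L^\infty,L^1)$ case of Young's inequality, then invoke conservation of $\mathcal H$ and $\|\psi\|_2$ to read the resulting inequality as a uniform upper bound on $\|\nabla\psi\|_2^2$. The only difference is cosmetic — you write the rearranged bound explicitly while the paper leaves it as the one-line inequality $\mathcal H \ge \|\nabla\psi\|_2^2 - \tfrac12\|A\|_\infty\|\psi\|_2^4$.
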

\begin{proof}
\begin{equation}\label{eq:nonsingular_estimate}
\mathcal{H} \geq \| \nabla \psi \|_2^2-\frac{1}{2}C \|\psi\|_2^4  \|A\|_\infty.
\end{equation}
Here, $C=1$ is a sharp constant (see, e.g., \cite{Lieb:2001}).
Since $\|\psi\|_2$ and the Hamilton $\mathcal H$ itself are conserved quantities under the flow of time (or propagation), one can read the inequality (\ref{eq:nonsingular_estimate})
as bounding $\| \nabla \psi \|_2^2$ from above in terms of conserved quantities.
\end{proof}

Let us try to connect the inequality~(\ref{eq:nonsingular_estimate}) with the earlier results of the stability of finite beams governed by the NNLS equation~(\ref{eq:nls})~\cite{Krolikowski:job:6:5}. This work states that in order to arrest collapse the spectrum of the nonlocal kernel $A$  should be
absolutely integrable. Since $\| A \|_{\infty} \leq C \|\tilde{A}\|_1$, where $\tilde{A}$ denotes the Fourier transform of $A$ (see \ref{notations}), the inequality~(\ref{eq:nonsingular_estimate}) clearly contains the result of Ref.~\cite{Krolikowski:job:6:5} and also explains earlier reports on collapse arrest for nonsingular kernels
 observed in numerical simulations~\cite{Perez:pre:62:4300}.

Further on, inequality~(\ref{eq:nonsingular_estimate}) shows that collapse can only be relevant if the kernel $A$ diverges
at a certain point (usually at the origin $x=0$). Then the assumption $A\in L^\infty(\mathbf{R}^n)$
is violated, as is the case, e.g., of local interaction [$A\sim\delta (x)$].
However, a diverging kernel $A$ does not necessarily imply collapse.
In order to ease the restrictions on the nonlocal kernel which ensure the collapse arrest we
have to restore to  the gradient norm $\|\nabla \psi\|_2$ apart from the $L^2$-norm $\|\psi\|_2$.
As we will show below one can weaken the sufficient restriction $A\in L^{\infty}$  by employing  the Hardy-Littlewood-Sobolev inequality (see \ref{hardy_littlewood}) and  the Gagliardo-Nirenberg-Sobolev inequality (see \ref{gagliardo_nirenberg}).

\begin{fact}
Let $x\in\mathbf{R}^n$, $A = f(x)/r^\alpha$, $r=|x|$ with $0\leq f(x)< \infty$, $\alpha<\mathrm{min}(2,n)$, $\psi(t=0,x)\in H^1(\mathbf{R}^n)$, $\kappa:=\mathrm{max_x f}<\infty$. Then collapse cannot occur.
\label{fact2}
\end{fact}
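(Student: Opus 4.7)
The plan is to follow the same Lyapunoff strategy as in Fact~1, but to control the interaction integral using the two Sobolev-type inequalities mentioned right before the statement. Let $\mathcal{I}[\psi]$ denote the interaction integral appearing in $\mathcal{H}$. Since $0\le f\le \kappa$, I first bound pointwise
\begin{equation}
\mathcal{I}[\psi] = \iint |\psi(x)|^2 \frac{f(x-y)}{|x-y|^\alpha}|\psi(y)|^2\,\mathrm{d}x\,\mathrm{d}y \le \kappa \iint \frac{|\psi(x)|^2|\psi(y)|^2}{|x-y|^\alpha}\,\mathrm{d}x\,\mathrm{d}y.
\end{equation}
The right-hand side is an $L^p$--$L^p$ Riesz potential pairing, so I would apply the Hardy-Littlewood-Sobolev inequality with the symmetric choice $p=q=\tfrac{2n}{2n-\alpha}$ (which is admissible precisely because $0<\alpha<n$, using the assumption $\alpha<\min(2,n)$). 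This gives
\begin{equation}
\iint \frac{|\psi(x)|^2|\psi(y)|^2}{|x-y|^\alpha}\,\mathrm{d}x\,\mathrm{d}y \le C_1\,\bigl\||\psi|^2\bigr\|_p^2 = C_1 \|\psi\|_{2p}^4, \qquad 2p=\tfrac{4n}{2n-\alpha}.
\end{equation}

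Next I would interpolate $\|\psi\|_{2p}$ between $L^2$ and $\dot H^1$ via the Gagliardo-Nirenberg-Sobolev inequality $\|\psi\|_q \le C_2 \|\psi\|_2^{1-\theta}\|\nabla\psi\|_2^{\theta}$ with $\theta=n(\tfrac12-\tfrac{1}{q})$. Plugging in $q=2p=\tfrac{4n}{2n-\alpha}$ yields $\theta=\alpha/4$, hence
\begin{equation}
\|\psi\|_{2p}^4 \le C_2^4 \|\psi\|_2^{4-\alpha}\,\|\nabla\psi\|_2^{\alpha}.
\end{equation}
Combining these two estimates gives $\mathcal{I}[\psi]\le C\kappa \,\|\psi\|_2^{4-\alpha}\|\nabla\psi\|_2^{\alpha}$, so the Hamiltonian obeys
\begin{equation}
\mathcal{H} \ge \|\nabla\psi\|_2^2 - \tfrac12 C\kappa\,\|\psi\|_2^{4-\alpha}\,\|\nabla\psi\|_2^{\alpha}.
\end{equation}

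Finally, because $\alpha<2$, the map $y\mapsto y^2 - c\,y^{\alpha}$ (with $c$ determined by the conserved $L^2$ norm and $\kappa$) is bounded below and coercive in $y$. Since both $\|\psi\|_2$ and $\mathcal{H}$ are invariants of the flow, this inequality bounds $\|\nabla\psi\|_2$ from above uniformly in time by a constant depending only on the initial data, $\kappa$, and the dimensional constants $C_1,C_2$. Hence $\|\nabla\psi\|_2$ cannot diverge and no collapse can occur.

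The main obstacle is a bookkeeping one: making sure that the three sharp-exponent constraints are simultaneously satisfied. Hardy-Littlewood-Sobolev forces $\alpha<n$ (so the symmetric Riesz exponent lies in $(1,\infty)$), Gagliardo-Nirenberg forces the GNS parameter $\theta=\alpha/4$ to lie in $[0,1]$ (automatic here), and the final Lyapunoff argument requires the exponent on $\|\nabla\psi\|_2$ produced by the chain HLS$\,\circ\,$GNS to be strictly less than $2$, which reduces exactly to $\alpha<2$. Together these give the hypothesis $\alpha<\min(2,n)$, and one should check that the constants $C_1,C_2$ coming from the appendices absorb into a single finite prefactor independent of $\psi$.
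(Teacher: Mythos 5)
Your proposal is correct and follows essentially the same route as the paper: bound $f$ by $\kappa$, apply the Hardy--Littlewood--Sobolev inequality with the symmetric exponent $p=2n/(2n-\alpha)$, then the Gagliardo--Nirenberg--Sobolev inequality to reach $\mathcal{H}\ge \|\nabla\psi\|_2^2-\tfrac12 \tilde C\|\psi\|_2^{4-\alpha}\|\nabla\psi\|_2^{\alpha}$, and conclude from conservation of $\mathcal{H}$ and $\|\psi\|_2$ together with $\alpha<2$. The only cosmetic difference is that you phrase the GNS step as an interpolation with $\theta=\alpha/4$ while the paper parametrizes it via $\sigma=\alpha/(2n-\alpha)$; these are identical.
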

\begin{proof}
\begin{equation}\label{eq:estimate2}
\begin{split}
 \mathcal{H} &  \geq  \| \nabla \psi \|_2^2 - \frac{1}{2} \kappa \iint \frac{|\psi(x)|^2|\psi(y)|^2}{|x-y|^\alpha}\mathrm{d}y \mathrm{d}x \\
 & \geq  \| \nabla \psi \|_2^2 - \frac{1}{2} \kappa C\| \psi \|_{2n/(n-\alpha/2)}^4 = \| \nabla \psi \|_2^2 - \frac{1}{2} \kappa C \left(\int |\psi|^{2n/(n-\alpha/2)} \mathrm{d}x\right)^{4\frac{n-\alpha/2}{2n}}
\end{split}
\end{equation}
In the second step we just applied the Hardy-Littlewood-Sobolev inequality (\ref{eq:hardy_littlewood}).
Introducing $\sigma = \alpha / (2n-\alpha)$, we can rewrite the last  expression in~(\ref{eq:estimate2}) and find:
\begin{equation}
\label{eq:estimate3}
\begin{split}
\mathcal{H}&\geq \| \nabla \psi \|_2^2 - \frac{1}{2} \kappa C \left(\int |\psi|^{2n/(n-\alpha/2)} \mathrm{d}x\right)^{4\frac{n-\alpha/2}{2n}} =\| \nabla \psi \|_2^2 - \frac{1}{2} \kappa C \left(\int |\psi|^{2+2\sigma} \mathrm{d}x\right)^{\frac{\alpha}{n\sigma}} \\
&\geq \| \nabla \psi \|_2^2 - \frac{1}{2} \kappa C \left[K\left(\int |\nabla \psi|^2 \mathrm{d}x\right)^{\frac{\sigma n}{2}} \left(\int|\psi|^{2}\mathrm{d}x\right)^\frac{2+\sigma
(2-n)}{2}\right]^{\frac{\alpha}{n\sigma}} \\
& =  \| \nabla \psi \|_2^2 - \frac{1}{2} \tilde{C} \| \nabla \psi \|_2^{\alpha} \| \psi \|_2^{4-\alpha}.
\end{split}
\end{equation}
Here, we employed the  Gagliardo-Nirenberg-Sobolev inequality (\ref{eq:GN}), and $\tilde{C}:=\kappa C K^{\alpha/n\sigma}$ is some positive constant independent of $\psi$.
Since $\alpha<2$ and because the Hamiltonian $\mathcal{H}$ and the $L^2$-norm $\|\psi \|_2$ are conserved quantities, inequality~(\ref{eq:estimate3})
can be read as a uniform bound of the gradient norm $\| \nabla \psi \|_2$.
\end{proof}

For dimension $n=1$, the convolution term in the Hamiltonian $\mathcal{H}$ and expression~(\ref{eq:estimate2}) diverge without regularization for singular kernels with $\alpha\geq1$, and
hence singularity cannot be strong enough to promote collapse.
Generally, when regularizing the kernel by, e.g., introducing a cut-off,
again the wave function does not collapse since the kernel is then nonsingular.

If we apply the lower bound for the Hamiltonian (\ref{eq:estimate3}) for the particular exponent $\alpha=2$ we find
\begin{equation}
\label{eq:critical_mass}
\mathcal{H} \geq  \| \nabla \psi \|_2^2 \left(1 - \frac{1}{2} \tilde{C}  \| \psi \|_2^2\right).
\end{equation}
Because we know the sharp constant $\tilde{C}:=\kappa C K^{\alpha/n\sigma}$ [$\kappa=1$, $C=C(n,\alpha)$ according to Eq.~(\ref{eq:hardy_littlewood_constant}), $K=K(\sigma,n)$ according to Eq.~(\ref{eq:GN_constant})] we can use the inequality (\ref{eq:critical_mass}) to compute a lower bound for the ''critical mass'' $\| \psi \|_2^2=2/\tilde{C}$ for collapse. In the physical relevant case of $n=3$, we find this lower bound numerically as $2/\tilde{C}=2.80$~\footnote{Note that the constant $K=K(\sigma,n)$ involves the quantity $\|R\|_2^{2\sigma}$, which has to be computed numerically.}, which is slightly lower than the numerical result for the mass of the groundstate soliton, $\| \psi_{\rm GS} \|_2^2=2.85$ (see Sec.~\ref{gs_num}). 
This finding may seem surprising, because for the local nonlinear Schr\"odinger equation the exact soliton mass can be computed from sharp constants in the lower bound of the Hamiltonian~\cite{Weinstein:1983,Turitsyn:1993}.
However, a closer look at the Hardy-Littlewood-Sobolev inequality (Sec.~\ref{hardy_littlewood}) reveals that equality can be expected only for wavefunctions $\psi$ of the form (\ref{eq:hardy_littlewood_equation}). Since our groundstate solitons $\psi_{\rm GS}(x)$ are not of the form~(\ref{eq:hardy_littlewood_equation})~\footnote{$\lim\limits_{r \rightarrow \infty}\psi_{\rm GS}\sim\exp(-r) \nsim r^{-4}$, with $|x|=r$.}, we have to expect $2/\tilde{C}<\| \psi_{\rm GS} \|_2^2$. Finally, we verified numerically that for the groundstate soliton (\ref{eq:hardy_littlewood}) is indeed an inequality. Interestingly, then the ratio of r.h.s. and l.h.s. of (\ref{eq:hardy_littlewood}) is about $2.85/2.80=\| \psi_{\rm GS} \|_2^2 \times \tilde{C}/2$.

\subsection{Virial identities}

The idea to use virial identities to derive sufficient conditions for collapse was first established in the context of the local two-dimensional nonlinear Schr\"odinger equation by Vlasov, Petrishchev, and
Talanov~\cite{vpt}, and the resulting condition $\mathcal{H}<0$ and is often referred to as the VPT criterion. The original idea behind the VPT criterion has been extended to various modifications of the original NLS equation, e.g., the Gross-Pitaevskii equation for a BEC with dipole-dipole interactions~\cite{Lushnikov:PRA:051601} or, more general, the NLS with $1/r^{\alpha}$ nonlocal response in three dimensions~\cite{Lushnikov:PRA:023615}.
Following the calculations presented in \cite{Lushnikov:PRA:023615}, one may observe that the derived virial identities hold for arbitrary dimension.
The virial is given by $V(t):=\int r^2 |\psi|^2 \mathrm{d}x$ with $r=|x|$.

\begin{fact}
Consider a kernel $A= 1/r^\alpha$, with $x\in\mathbf{R}^n$ and $\alpha<n$.
Then the second time derivative of the virial in arbitrary dimension $n$ is given by
\begin{equation}
\begin{split}
\partial_{tt} V & = 8 \left( \int |\nabla \psi|^2 \mathrm{d}x - \frac{\alpha}{4}\iint \frac{|\psi(y)|^2|\psi(x)|^2}{|x-y|^\alpha} \mathrm{d}y\mathrm{d}x\right) \\
 & =8\mathcal{H} -2\left(\alpha-2\right)\iint \frac{|\psi(y)|^2|\psi(x)|^2}{|x-y|^\alpha} \mathrm{d}y\mathrm{d}x
\end{split}
\end{equation}
\label{fact3}
\end{fact}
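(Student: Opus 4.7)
The plan is to differentiate $V(t)$ twice in time, substitute the NNLS equation~(\ref{eq:nls}) wherever $\partial_t\psi$ appears, and collect terms. Introducing the density $\rho:=|\psi|^2$ and the current $j:=\operatorname{Im}(\bar\psi\nabla\psi)$, equation~(\ref{eq:nls}) and its conjugate combine to give the continuity equation $\partial_t\rho=-2\nabla\cdot j$. Multiplying by $|x|^2$ and integrating by parts once yields $\partial_t V=4\int x\cdot j\,\mathrm{d}x$, the boundary contribution vanishing under the decay assumptions implicit in $V(t)<\infty$.

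The second differentiation requires $\partial_t j$. A direct calculation using~(\ref{eq:nls}) and its conjugate gives
\begin{equation}
\partial_t j_k=\operatorname{Re}\bigl(\bar\psi\,\partial_k\Delta\psi-\Delta\bar\psi\,\partial_k\psi\bigr)+\rho\,\partial_k U,\qquad U:=A*\rho,
\end{equation}
so that $\partial_{tt}V=4\int x\cdot\partial_t j\,\mathrm{d}x$ splits into a kinetic contribution $I_1$ and a nonlinear one $I_2=4\int\rho\,x\cdot\nabla U\,\mathrm{d}x$. For $I_1$, integrating by parts in each of the two terms via $\partial_k x_l=\delta_{kl}$ produces $\pm\|\nabla\psi\|_2^2$ together with a mixed second-derivative integral that is identical in both terms; the mixed pieces cancel upon subtraction, yielding $I_1=8\|\nabla\psi\|_2^2$ in every dimension.

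The crucial step is $I_2$. With $A(x)=|x|^{-\alpha}$ one computes $\nabla U(x)=-\alpha\int(x-y)|x-y|^{-\alpha-2}\rho(y)\,\mathrm{d}y$, whence
\begin{equation}
I_2=-4\alpha\iint\frac{x\cdot(x-y)}{|x-y|^{\alpha+2}}\,\rho(x)\rho(y)\,\mathrm{d}x\,\mathrm{d}y.
\end{equation}
The integrand is not symmetric under $x\leftrightarrow y$ but the double integral is; averaging the integrand with its swapped image replaces $x\cdot(x-y)$ by $\tfrac12[x\cdot(x-y)+y\cdot(y-x)]=\tfrac12|x-y|^2$. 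Two powers of $|x-y|$ cancel, leaving $I_2=-2\alpha\iint|\psi(x)|^2|\psi(y)|^2|x-y|^{-\alpha}\,\mathrm{d}x\,\mathrm{d}y$. Adding $I_1+I_2$ gives the first displayed identity, and substituting $\|\nabla\psi\|_2^2=\mathcal{H}+\tfrac12\iint|\psi|^2A|\psi|^2$ produces the second.

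The main obstacle is not the algebra but the justification: the repeated integrations by parts require enough spatial decay of $\psi(t)$ to kill boundary contributions, and the convolution integrals must be absolutely convergent. The assumption $\alpha<n$ makes $|x|^{-\alpha}$ locally integrable, and together with the Hardy-Littlewood-Sobolev bound already invoked in Lemma~\ref{fact2} it guarantees $\iint|\psi|^2A|\psi|^2<\infty$ for $\psi\in H^1$. A density argument approximating the initial data by smooth, rapidly decaying functions then turns the formal computation into a rigorous identity.
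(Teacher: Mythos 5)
Your proposal is correct and follows essentially the same route as the paper: compute $\partial_t V=4\int x\cdot\operatorname{Im}(\bar\psi\nabla\psi)\,\mathrm{d}x$, differentiate again using the equation of motion, and reduce the nonlinear contribution by symmetrizing under $x\leftrightarrow y$ so that $x\cdot(x-y)+y\cdot(y-x)=|x-y|^2$ cancels two powers of the kernel. The only differences are organizational (you route the computation through the momentum density $j$ and $\partial_t j$ rather than quoting the intermediate identity of the cited three-dimensional calculation) plus your added remarks on convergence and boundary terms, which the paper omits.
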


\begin{proof}
This task was already done for three dimensions in \cite{Lushnikov:PRA:023615}.
Therefore we only briefly sketch the derivation of this virial identity here.
Apparently,
\begin{equation}
\partial_t V = 4\Im\int \psi^* x\nabla \psi \mathrm{d} x
\end{equation}
Using the equation of motion, a longer but straightforward calculation leads to (see also \cite{chen:physD:227:2})
\begin{equation}
\label{intermediate}
\partial_{tt} V = 8\int |\nabla \psi|^2- 4n \int \frac{|\psi(x)|^2|\psi(y)|^2}{|x-y|^\alpha} \mathrm{d}y \mathrm{d} x \\
+4\int x\nabla (|\psi (x)|^2) \int \frac{|\psi(y)|^2}{|x-y|^\alpha}\mathrm{d}y\mathrm{d}x
\end{equation}
Integrating the last term by parts and re-writing it in a more favorable way gives
\begin{align}
&\quad\int x\nabla (|\psi (x)|^2) \int \frac{|\psi(y)|^2}{|x-y|^\alpha}\mathrm{d}y\mathrm{d}x \\
&= n\iint \frac{|\psi (x)|^2 |\psi(y)|^2}{|x-y|^\alpha}\mathrm{d}y \mathrm{d}x -\frac{\alpha}{2}\iint x\frac{ (x-y)|\psi (x)|^2 |\psi(y)|^2}{|x-y|^{\alpha+2}}\mathrm{d}y \mathrm{d}x\\
&-\frac{\alpha}{2}\iint y\frac{ (y-x)|\psi (x)|^2 |\psi(y)|^2}{|x-y|^{\alpha+2}}\mathrm{d}y \mathrm{d}x. \nonumber
\end{align}
Plugging this into Eq.~(\ref{intermediate}), the first term cancels out, and the two symmetric terms with respect to $x$ and $y$ add up and give the desired expression.
\end{proof}

In particular, Lemma \ref{fact3}
implies $\partial_{tt}V=8\mathcal{H}$ for $A=1/r^2$ (or $\alpha=2$) in arbitrary dimension.
Together with a particular form of H{\"o}lder's inequality
\begin{equation}
\label{hoelder}
 \int r |\psi|  |\nabla \psi| \mathrm{d}x \leq  \left(\int r^2 |\psi|^2 \mathrm{d}x \right)^{1/2}  \left(\int |\nabla \psi|^2 \mathrm{d}x\right)^{1/2},
\end{equation}
we can derive a sufficient criterion for collapse.

\begin{fact}
Let $x\in\mathbf{R}^n$, $n>2$, $A = 1/r^2$, $\psi(t=0,x)\in H^1(\mathbf{R}^n)$, and the Hamiltonian $\mathcal{H}<0$. Then the solution $\psi$ undergoes collapse.
\label{fact4}
\end{fact}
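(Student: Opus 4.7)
The plan is to exploit Lemma \ref{fact3} in the special case $\alpha=2$, where the second virial identity collapses to an explicitly integrable ODE, and then to use H\"older's inequality (\ref{hoelder}) to convert vanishing of the virial into blow-up of the gradient norm.

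First I would invoke Lemma \ref{fact3} with $\alpha=2$ (allowed since $n>2=\alpha$), which yields the identity
\begin{equation*}
\partial_{tt} V(t) = 8\mathcal{H}.
\end{equation*}
Because $\mathcal{H}$ is conserved by the flow of (\ref{eq:nls}), the right-hand side is constant, so double integration gives
\begin{equation*}
V(t) = V(0) + V'(0)\, t + 4\mathcal{H}\, t^2.
\end{equation*}
With $\mathcal{H}<0$, this downward parabola is guaranteed to reach zero at some finite time $t^{\ast}>0$. However $V(t)=\int r^2 |\psi|^2\,\mathrm{d}x \geq 0$ by construction, so the assumed smooth $H^1$-valued dynamics cannot persist all the way up to $t^{\ast}$.

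Next I would quantify this breakdown in terms of $\|\nabla\psi\|_2$. Using $\nabla\!\cdot\!x=n$ and integrating by parts,
\begin{equation*}
n\|\psi\|_2^2 = -\int x\cdot\nabla|\psi|^2\,\mathrm{d}x = -2\,\Re\!\int x\cdot\psi^{\ast}\nabla\psi\,\mathrm{d}x \;\leq\; 2\int r\,|\psi|\,|\nabla\psi|\,\mathrm{d}x.
\end{equation*}
Applying H\"older's inequality (\ref{hoelder}) to the last integral and using conservation of $\|\psi\|_2$,
\begin{equation*}
\|\nabla\psi\|_2 \;\geq\; \frac{n\,\|\psi\|_2^2}{2\,V(t)^{1/2}}.
\end{equation*}
Since $V(t)\to 0$ no later than $t=t^{\ast}$, the right-hand side diverges, so $\|\nabla\psi\|_2$ must blow up in finite time; this is precisely the collapse in the sense defined in the preceding discussion.

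The main obstacle is not the algebra but rather the justification that Lemma \ref{fact3} applies to the solution on its full interval of existence, i.e.\ that $V(t)$ is finite and twice differentiable with the stated second derivative whenever $\psi(t)\in H^1$ and the convolution integral converges. A fully rigorous argument would usually proceed by introducing a smooth spatial cutoff $\chi_R(x)$, deriving the virial ODE for $V_R(t):=\int \chi_R r^2 |\psi|^2\,\mathrm{d}x$, and then passing to the limit $R\to\infty$ using the assumed initial decay. Once this is in place the rest is essentially a one-line argument: $\alpha=2$ is precisely the exponent that eliminates the extra convolution term in Lemma \ref{fact3}, turning the virial into a parabola in $t$ whose forced conflict with $V\geq 0$, combined with the Hölder--integration-by-parts lower bound on $\|\nabla\psi\|_2$, forces collapse.
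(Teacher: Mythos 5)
Your proposal is correct and follows essentially the same route as the paper: Lemma~\ref{fact3} with $\alpha=2$ gives $\partial_{tt}V=8\mathcal{H}$, so a negative conserved Hamiltonian forces the nonnegative virial to vanish in finite time, and the integration-by-parts/H\"older bound $\|\psi\|_2^2\leq C\,V^{1/2}\|\nabla\psi\|_2$ then forces the gradient norm to diverge. Your explicit integration of the parabola and the remark about justifying the virial identity via a spatial cutoff merely fill in details the paper leaves implicit.
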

\begin{proof}
As seen above, the second time derivative of the virial $V$ equals eight times the Hamiltonian. The virial is positive for $t=0$ and has the apparent meaning of the mean width of $\psi$. Hence, for a negative Hamiltonian, the virial vanishes at a certain time $t$.

Using partial integration and H{\"o}lder's inequality~(\ref{hoelder}),
one can infer that a vanishing virial leads to collapse due to conservation of the $L^2$-norm $\|\psi \|_2$ of the wavefunction and
\begin{equation}
\int |\psi|^2 \mathrm{d}x \leq C \left( \int r^2 |\psi|^2 \mathrm{d}x \right)^{1/2} \left( \int |\nabla \psi|^2 \mathrm{d}x \right)^{1/2},
\end{equation}
where the positive constant $C$ does not depend on $\psi$.
\end{proof}

In other words, if the mean width or the virial becomes zero, the gradient norm has to diverge in order to keep the $L^2$-norm constant. Closer inspection of Lemma~\ref{fact3} reveals that a negative Hamiltonian also implies collapse for $\alpha>2$, as already pointed out in~\cite{Lushnikov:PRA:023615}.
Together with Lemma~\ref{fact2}, we see that $\alpha=2$ is a sharp boundary for collapse in arbitrary dimension $n>2$.

\subsection{Self-similar solutions for the $1/r^2$ kernel}\label{sec:self-similar}

Let us now have a look at the special case $A = 1/r^2$. As we have seen in the previous section, this singular kernel
marks the boundary between collapsing and non-collapsing solutions. This is somewhat similar to the two-dimensional local
nonlinear Schr\"odinger equation, and we will find many aspects of this famous equation here. However, in the nonlocal case the collapse threshold
does not appear as a critical dimension, but as a critical exponent of the singularity of the kernel.

The NNLS equation with kernel $A = 1/r^2$ appears as equation of motion of the following Lagrangian density,
\begin{equation}
\mathcal{L}:=\frac{i}{2}\left(\psi\partial_{t}\psi^{*}-
\psi^{*}\partial_{t}\psi\right)+
\left|\nabla\psi\right|^{2}-
\frac{1}{2}\left|\psi\right|^{2}
\int \frac{1}{|x-y|^2}\left|\psi\left(y,t\right)\right|^{2}
\mathrm{d}y.
\end{equation}
To simplify the considerations and following \cite{berge:PhysRep:1998}, let us assume radial symmetry. A possible ansatz for a self-similar solution (or similariton \cite{Dudley:NatPhys:2007}) is
\begin{equation}
\psi = B(t) R(r/a(t)) e^{i\theta (r,t)}
\label{var_ansatz}
\end{equation}
with $r=|x|$. Here, $R(t/a(t))$ is a real profile, and $a(t)$ is the wave radius. Due to the conservation of the $L^2$-norm $\|\psi \|_2$ and the related continuum equation, which comes about  from the global $U(1)$ invariance of the Lagrangian, one finds the interdependence between $B$ and $a$:
\begin{equation}
 |B|^2 a^n =: P = const.
\end{equation}
 An appropriate ansatz for the radially symmetric phase is
\begin{equation}
\theta(r,t) = \theta_2 (t) r^2 +\theta_0(t).
\label{var_phase}
\end{equation}
 Using $\theta$ and $P$, one may express the spatially integrated Lagrangian density as follows:
\begin{equation}
L=-|B|^2 a^n \left[ \kappa_0 \partial_t\theta_0 + \kappa_2 a^2 \left( \partial_t\theta_2 +4\theta_2^2 \right)+ \frac{\gamma}{a^2}  \right] \\
-\beta |B|^4 a^{2n-2},
\end{equation}
where we introduced
\begin{equation}
\kappa_m :=  \int r^{m} R^2(r) \mathrm{d}x, \quad
\beta:= 1/2 \int \frac{R^2(|y|)R^2(r)}{|x-y|^2}\mathrm{d}x\mathrm{d}y \textrm{, and} \quad
\gamma := \int (\partial_r R(r))^2 \mathrm{d}x.
\end{equation}
The variational problem $\delta L=0$ with respect to $B$, $\theta_0$, $\theta_2$ and $a$ gives the equations of motion for these parameters.
Using $P=|B|^2 a^n$ and $\theta_2=(\partial_t a)/4a$, one may rewrite the equation of motion for $a$ in the form
\begin{equation}
\partial_{tt}a = \frac{4\gamma}{\kappa_2}\frac{1}{a^3} - \frac{4\beta P}{\kappa_2} \frac{1}{a^3}.
\end{equation}
Elementary integration with the initial conditions $a_0 := a(t=0)$, and $\partial_t a(t=0) = 0 $ gives
\begin{equation}
a(t) = a_0 \sqrt{ t^2 \left( \frac{4\gamma}{\kappa_2 a_0^4} - \frac{4\beta P}{\kappa_2 a_0^4}\right) +1 }.
\end{equation}
Now one can estimate the critical time $t_{\rm cr}$, where $a\rightarrow 0$ as
\begin{equation}\label{eq:t_c}
t_{\rm cr} = \left( \frac{4\beta P}{\kappa_2 a_0^4}   -\frac{4\gamma}{\kappa_2 a_0^4}\right)^{-1/2}.
\end{equation}
In terms of $P$, one may express the condition for collapse as $P>\frac{\gamma}{\beta}$. Compared to the virial results from the previous section,
this threshold value indicates that the Hamiltonian $\mathcal{H}$ becomes negative.

It is important to stress here that the variational approach presented above strongly relies on the ansatz Eqs.~(\ref{var_ansatz}) and~(\ref{var_phase}). A priori, there is no guarantee that variational results show even qualitative agreement with the exact solutions. A classical example of unsuccessful application of the variational method is given in~\cite{varfail}, where the authors show that the description of solitons interacting with radiation in the one dimensional local NLS equation fails. However, as far as collapse in the two and three dimensional local NLS is concerned, the above ansatz is capable of giving useful estimates~\cite{2dcollapse,3dcollapse,berge:PhysRep:1998}. As we will see in the following sections, in our case variational predictions show some qualitative agreement with rigorous numerical simulations in the physically relevant case of $n=3$. In the light of Ref.~\cite{3dcollapse}, this is indeed the best we can expect.

\section{(3+1)-dimensional numerical investigation of collapse dynamics and collapse arrest for the $1/r^2$ kernel}

In this section we will investigate the dynamics of finite wave packets using  numerical integration of the
NNLS equation~(\ref{eq:nls}).
  We start by using variational methods to find families of approximate localized solutions to the stationary version of this equation,
and compare them to their numerically exact counterparts computed by imaginary time evolution~\cite{Tosi} or iteration technique~\cite{Skupin:pre:73:066603}.
Both of them will be subsequently used as initial conditions in numerical simulations in order to illustrate possible scenarios of
time evolution of the wave function in case of the nonlocal $1/r^2$ kernel. Our (3+1)-dimensional simulation code is based on a Fourier split-step method.
In order to resolve the singularity of the kernel properly, which is of crucial importance when studying collapse phenomena, we use a method outlined
in \ref{sec:resolve_singularity}.
In the following, we will also investigate the impact of an additional local repulsive
(defocusing) term on the collapse dynamics. In particular, we will be interested
whether the presence of the repulsive local interaction can arrest the collapse.
To this end, let us consider a NNLS equation of the following form:
\begin{equation}
\label{nlsnumerics}
i \partial_t \psi(x,t) = -\Delta \psi(x,t) - \left[\int \frac{1}{|x-y|^2}\left|\psi\left(y,t\right)\right|^{2}
\mathrm{d}y-\delta\left|\psi(x,t)\right|^{2} \right] \psi(x,t).
\end{equation}
where  the parameter  $\delta$ in front of the local term is either $0$ or $1$.
On physical grounds, the additional local repulsive term occurs in Bose-Einstein-Condensates as a result of $s$-wave scattering~\cite{Dalfovo:1999}.

\subsection{Approximate variational and exact numerical solutions for the groundstate \label{gs_num}}

To find approximate localized groundstate solutions to the NNLS equation we use the following ansatz
\begin{equation}\label{gaussian}
\psi(x,t) := A_0\exp\left(-\frac{r^{2}}{2\sigma^{2}}\right)\exp(iEt)
\end{equation}
and vary the corresponding Lagrangian with respect to width $\sigma$ and amplitude $A_0$. Here, we restrict our considerations to dimension $n=3$, so that $x=(x_1,x_2,x_3)$.
The results are shown in Figs.~\ref{fig:1}(a-b), which depict the so called existence curves for the solutions, i.e., the relations between mass $M:=\|\psi \|_2^2$, energy $E$
 and width $\sigma$.

The solid lines in Fig.~\ref{fig:1}(a) represent families of solutions obtained by variational analysis with (black line) and without (blue line) repulsive local interaction ($\delta=0$),
and the dashed green horizontal line together with the blue dots show the same for the numerical exact soliton solution of Eq.~(\ref{nlsnumerics}). 
We expect from this plot, that the groundstate may indeed collapse without repulsion ($\delta=0$), since for fixed mass $M$ the width $\sigma$ can be arbitrarily small.
Typical length-, time- and amplitude-scales $R_c$, $T_c$ and $\psi_c$ for an exact solution behave like $1/T_c\sim 1/R_c^2\sim \psi_c^2 R_c$. Hence, the 
invariance of the mass $M$ with respect to energy $E$ can be expected from $M\sim\psi_c^2 R_c^3\sim \rm{const}$. The same argument holds for arbitrary dimension $n$,
where we have $1/R_c^2\sim \psi_c^2 R_c^n/R_c^2 $.
In contrast, when including the additional local repulsive interaction competing with the nonlocal attractive one, we expect absence of collapse from Fig.~\ref{fig:1}(a), since the
mass $M$ of the groundstate is increasing with decreasing width $\sigma$.

\begin{figure}
\includegraphics[width=0.99\textwidth]{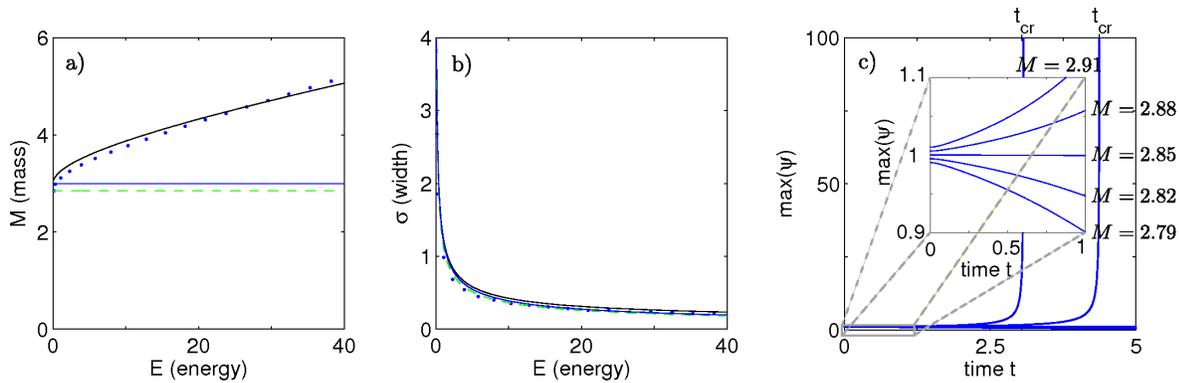}
\caption{Existence curves of the groundstate soliton [(a) and (b)] obtained by variational analysis and numerical computation for
nonlocal attractive $1/r^2$ response (blue solid and dashed green line, resp.) and with competing local repulsive interaction (black solid and dotted blue line, resp.).
c) Time evolution of the amplitude of the numerically exact groundstate. If the mass of the groundstate is increased, the solution collapses (shown here for $\sqrt{1.01}$ and $1.01 \times \psi_{\rm GS}$), and if the mass is decreased, the solution spreads upon propagation (shown here for $0.99$ and $\sqrt{0.99} \times \psi_{\rm GS}$). The inset shows a zoom into the early propagation stage.
}
\label{fig:1}
\end{figure}

Before presenting rigorous numerical simulations of the time evolution of the above solutions, let us
evaluate the expression for the critical collapse time $t_{\rm cr}$ derived in section \ref{sec:self-similar}. Plugging                                                                                                                                             
the Gaussian approximate variational groundstate as well as the numerically exact soliton as self-similar profile into Eq.~(\ref{eq:t_c}), the critical time $t_{\rm cr}$ diverges, and we expect that collapse happens at infinite times. However, when multiplying the profiles by some factor larger than one, $t_{\rm cr}$ becomes finite, and we expect collapse at time $t_{\rm cr}$.

These predictions are confirmed by our numerics for the exact groundstate, as can be seen in Fig.~\ref{fig:1}(c). The groundstate is self-trapped and evolves with small change in amplitude (invisible in the figure) due to numerical imperfections.
If we multiply the exact groundstate
by a factor smaller than one (e.g. $\sqrt{0.99}$), it spreads with time, whereas if we multiply it by a factor larger than one (e.g. $\sqrt{1.01}$), it collapses. The critical times $t_{\rm cr}$ obtained from Eq.~(\ref{eq:t_c}) are in excellent agreement with the simulations.
In contrast, using the approximate variational groundstate as initial condition, we see that it collapses [Fig.~\ref{fig:2}a)] after rather short time. Thus, the estimate for the critical time $t_{\rm cr}=\infty$ is incorrect. This is related to the fact
that the approximate variational Gaussian profile has a larger mass than the exact soliton ($M=3 > M _{\rm GS}$). 
When we multiply the variational groundstate by a factor larger than one [see Fig.~\ref{fig:2}(b)], we see collapse earlier than predicted by Eq.~(\ref{eq:t_c}). Collapse-times $t_{\rm cr}$ are systematically overestimated.
Finally, collapse can be prevented by adding a local repulsive term to the nonlinear potential in the original NNLS equation [see Fig.~\ref{fig:2}(c)].

 \begin{figure}
\includegraphics[width=0.99\textwidth]{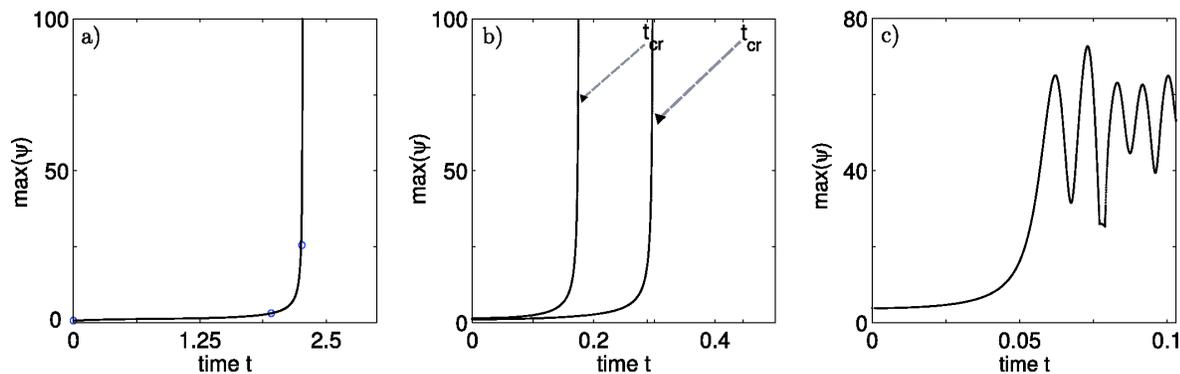}
 \caption{
The evolution of the amplitude for the initial variational  groundstate is shown in a). The blue circles correspond to snapshots of the evolution shown in Fig.~\ref{fig:3}.
In b), the evolution of the variational groundstate multiplied by $2$ (left curve) and $1.5$ (right curve)
is plotted and compared to the estimates for the critical time $t_{\rm cr}$.
In c), the possibility to prevent collapse using a local repulsive term is
illustrated: The variational groundstate multiplied by a factor of $5$ remains much smaller than the collapsing ones without repulsion shown in b). }
\label{fig:2}
 \end{figure}

The time evolution of the initially Gaussian variational groundstate in Fig.~\ref{fig:2}(a) can provide us with some deeper insight into the collapse dynamics. From our previous attempts to predict the critical collapse time $t_{\rm cr}$, we may suspect that the exact soliton itself is the self-similar profile. In order to corroborate this hypothesis, Figs.~\ref{fig:3}(a), (b) and (c)
show some snapshots of the radial profiles during time evolution corresponding to Fig.~\ref{fig:2}(a). 
Already after evolving over a short time, the wave function
clearly deviates from a Gaussian profile, and resembles more the numerically exact solution [see black curves in Fig.~\ref{fig:2}(b)]. Further propagation shows that
the collapsing solution converges to the exact groundstate. A similar behavior is know for the classical two-dimensional local nonlinear
Schr\"odinger equation, where the famous Townes soliton naturally emerges in the collapse process (see, e.g., ~\cite{fraiman,PhysRevLett.90.203902}).
Moreover, we checked that an initially ellipsoidal wave function with slightly too large mass will collapse as a radially symmetric formation, that again resembles the numerically exact solution (not shown).

\begin{figure}
\includegraphics[width=0.99\textwidth]{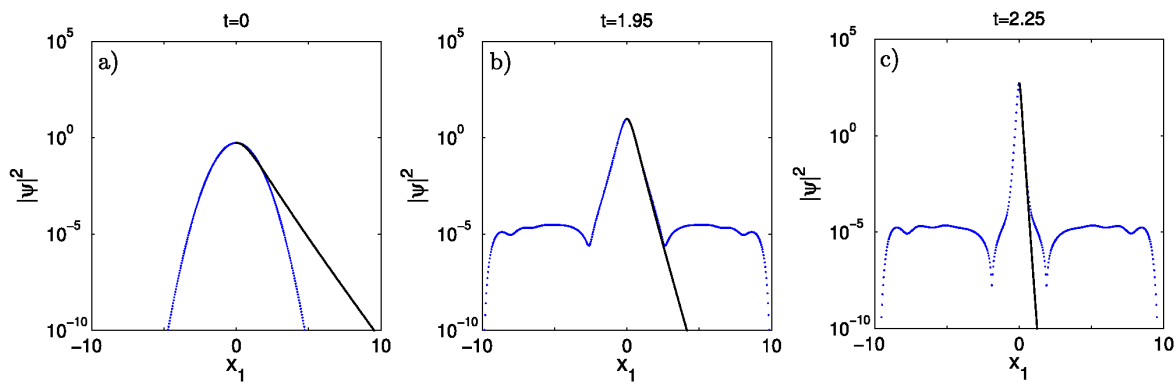}
\caption{
Snapshots of the time evolution of the initially Gaussian variational groundstate of Fig.~\ref{fig:2}(a) at a) $t=0$, b) $t=1.95$ and c) $t=2.25$. The black curves show numerically exact solitons with the same amplitude. Clearly, some part of the initially Gaussian wave function is radiated away, and the actually collapsing part 
converges towards the exact soliton.}
\label{fig:3}
\end{figure}

\subsection{Evolution scenarios for toroidal states}

Finally, we want to have a look at higher order solitons and their collapse behavior. For this purpose, we consider toroidal states, i.e. the three-dimensional generalization of a vortex. Toroidal states have an azimuthally symmetric density together 
with an azimuthal phase ramp, which leads to a phase singularity of integer topological charge $m$ in the origin
\begin{equation}
\psi(r,\phi,z,t) = \rho(r,z)\exp(im\phi)\exp(iEt).
\end{equation}
In order
to find numerically exact toroidal solitons (or vortices), we use imaginary time evolution starting from tori computed from a variational ansatz. 
As for the groundstate soliton in Sec.~\ref{gs_num}, we find that adding $1\%$ to the mass $M$ leads to
collapse of the toroidal soliton, whereas subtracting $1\%$ leads to delocalization of the wavefunction, that cannot self-trap anymore [see Fig.~\ref{fig:dynamics}(a)].
However, the toroidal state is azimuthally unstable.
Adding a small amplitude noise is enough to trigger the instability and due to the interplay between azimuthal phase ramp and amplitude modulation along the ring the whole structure starts to rotate~\cite{Desyatnikov:05}. Upon further propagation, the ring breaks into two humps (groundstate solitons), that eventually collapse individually [see Figs.~\ref{fig:dynamics}(b)-(f)]. This secondary collapse is possible because the mass of the torus fulfills $M_{\rm Torus}>2\times M_{\rm GS}$.

 \begin{figure}
 \centerline{\includegraphics[width=0.8\textwidth]{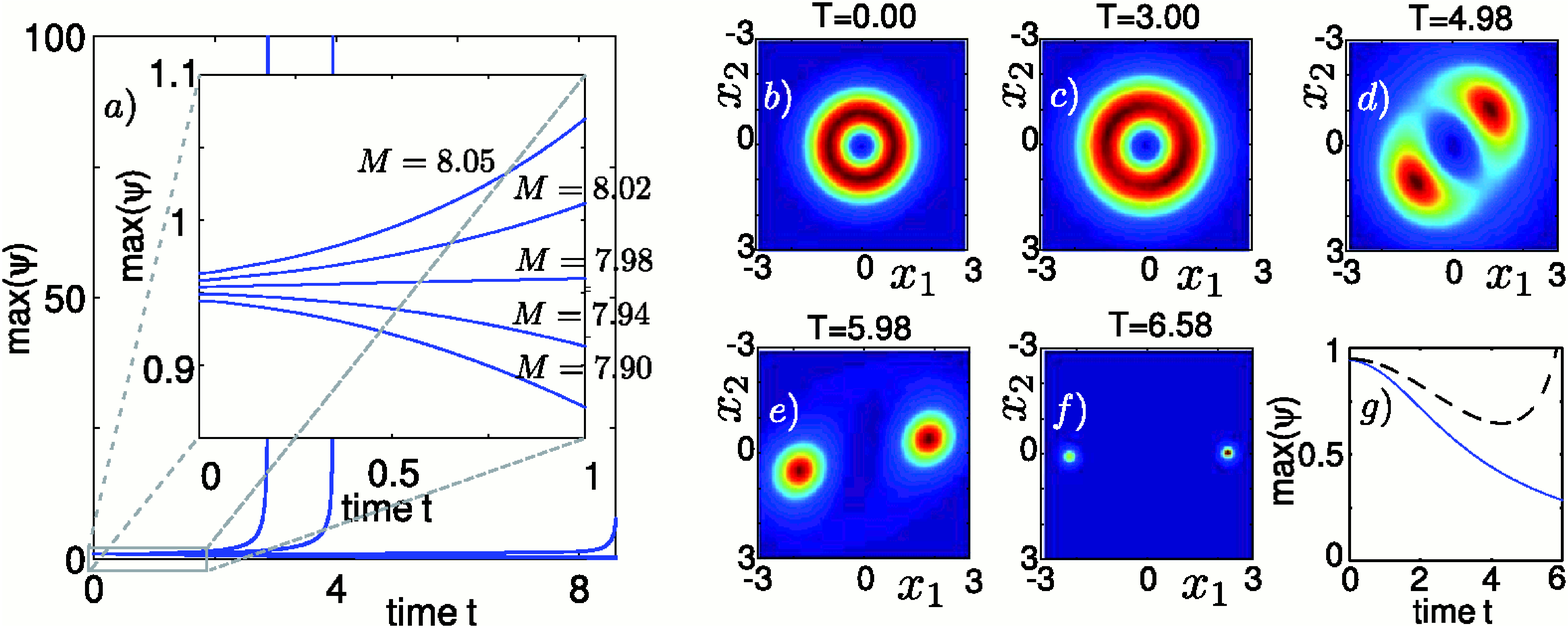}}
 \caption{Time evolution scenarios for the single charged ($m=1$) toroidal soliton.
a) Time evolution of the amplitude. When reducing the mass by $1\%$, the solution spreads, whereas it collapses for masses  $M\geq 7.98$. b)-f) The vortex may decay into two groundstate solitons due to an azimuthal instability, which then collapse individually.
g) The azimuthal instability can be triggered by adding random noise (here $1\%$ in amplitude), so that instead of spreading (blue solid line) a torus with $M=7.90$ decays and collapses (black dashed line). The corresponding 3D evolution of the modulus square of the wave function in the $(x_1,x_2)$ plane ($x_3=0$) is depicted in b)-f).
}
\label{fig:dynamics}
 \end{figure}

\section{Conclusions}
The phenomenon of collapse in  the nonlocal nonlinear Schr{\"o}dinger equation has been investigated in arbitrary dimensions. We showed that nonsingular nonlocal kernels do not support collapsing solutions. More general, it has been shown that for singular kernels $\sim 1/r^{\alpha}$, the necessary condition for collapse to occur is $\alpha \geq 2$.
Various collapse scenarios for the $1/r^2$ kernel have been studied numerically in  the physically relevant case of three transverse dimensions ($n=3$).
The critical time (or distance) when collapse occurs has been estimated by using self-similar solutions.
Apart from the groundstate solution, collapse of a toroidal state with absence and presence of local repulsive interaction has also been studied in this particular system.
Generally, it appears that the initial torus is azimuthally unstable. 
Finally, we showed that independently of the initial state, collapse can be arrested by adding a local repulsive contribution to the attractive $1/r^2$ kernel.

\section*{Acknowledgements}
The authors (S.~S.) would like to thank S.~K.~Turitsyn for fruitfull discussions
and the anonymous referees for valuable comments which helped to improve the paper.
Numerical simulations were partly performed on the SGI
XE Cluster and the Sun Constellation VAYU Cluster of the
Australian Partnership forAdvanced Computing (APAC). This
research was supported by the Australian Research Council.

\appendix
\section{Notations\label{notations}}
We use common functional analysis definitions.
The symbol $L_p(\mathbf{R}^n)$ denotes the space of measurable functions for which the Lebesgue measure
\begin{equation}
 \| \psi \|_p := \left (\int |\psi|^p \mathrm{d}x \right)^{1/p}
\end{equation}
exists. A function is in $L_\infty(\mathbf{R}^n)$, if there is a constant $C>0$, such that $|\psi|$ is smaller than $C$ for almost every
$x\in\mathbf{R}^n$.
The symbol $H^1(\mathbf{R}^n)$ (Sobolev space) is the space of measurable functions for which the norm, defined as
\begin{equation}
 \| \psi \|_{H^1} := \|\nabla \psi \|_2 + \|\psi\|_2
\end{equation}
is smaller than $\infty$. Here, $\int$ denotes $\int_{-\infty}^\infty$.

The Fourier transform $\tilde{f}$ of a function $f$ is given by
\begin{equation}
\tilde{f}(k) = \int e^{-2i\pi (k,x) } f(x) \mathrm{d}x
\end{equation}
with $x,k\in\mathbf{R}^n$, and $(\cdot,\cdot)$ denotes the scalar product.

Apart from that, $|x|=\sqrt{x_1^2 + x_2^2 + \dots} = r$ is the usual Euclidean distance.

\section{Hardy-Littlewood-Sobolev inequality\label{hardy_littlewood}}
Let $p,l>1$ and $0<\alpha<n$ with $1/p + \alpha/n + 1/l = 2$. Let $f\in L^p(\mathbf{R}^n)$ and $h\in L^l(\mathbf{R}^n)$. Then there exists a sharp constant $C(n,\alpha,p)$
independent of $f$ and $h$, such that
\begin{equation}
\left| \int \int \frac{f(x) h(y)}{ \left|x-y\right|^{\alpha}} \mathrm{d}x\mathrm{d}y\right| \leq C(n,\alpha,p) \|f\|_p \|h\|_l.
\label{eq:hardy_littlewood}
\end{equation}
The explicit functional dependence of the $C(n, \alpha,p)$ is known (see e.g. \cite{Lieb:2001}). For our purposes,  the special case $p=l=2n/(2n-\alpha)$ is important, where the sharp constant is given by
\begin{equation}
C(n,\alpha,p)=C(n,\alpha)=\pi^{\alpha/2}\frac{\Gamma(n/2-\alpha/2)}{\Gamma(n-\alpha/2)}\left[\frac{\Gamma(n/2)}{\Gamma(n)} \right]^{-1+\alpha/n}.
\label{eq:hardy_littlewood_constant}
\end{equation}
In this case there is equality in (\ref{eq:hardy_littlewood}) if and only if $h=Bf$ and
\begin{equation}
f(x)=A\left(\gamma^2+\left|x-a\right|^2\right)^{-(2n-\alpha)/2}.
\label{eq:hardy_littlewood_equation}
\end{equation}
for some complex constants $A$ and $B$, $0\neq\gamma$ real and $a\in\mathbf{R}^n$.~\cite{Lieb:2001}

\section{Gagliardo-Nirenberg-Sobolev inequality\label{gagliardo_nirenberg}}
Let $f\in H^1(\mathbf{R}^n)$ and $0<\sigma<2/(n-2)$. Then there exists a sharp constant $K(\sigma,n)$
independent of $f$, such that
\begin{equation}\label{eq:GN}
\int|f(x)|^{2\sigma+2}\mathrm{d}x \leq K(\sigma,n) \left(\int |\nabla f(x)|^2 \mathrm{d}x\right)^{\frac{\sigma n}{2}} \left(\int|f(x)|^{2}\mathrm{d}x\right)^\frac{2+\sigma (2-n)}{2}.
\end{equation}
The explicit functional dependence of the $K(\sigma,n)$ is known (see e.g. \cite{Sulem:1999}), and reads
\begin{equation}
K(\sigma,n)=\left(\sigma+1\right)\frac{2\left(2+2\sigma-\sigma n\right)^{-1+\sigma n/2}}{\left(\sigma n\right)^{\sigma n/2}}\frac{1}{\|R\|_2^{2\sigma}},
\label{eq:GN_constant}
\end{equation}
where $R$ is the positive solution of $\Delta R-R+R^{2\sigma+1}=0$.

\section{Numerical implementation of the singularity of $1/r^2$ }\label{sec:resolve_singularity}

Since we study collapse phenomena, it is crucial to define the properly in a numerical sense kernel $1/r^2$ and, in particular, its value at $r=0$ . To this end we 
used the following procedure (see, e.g., \cite{Martyna:JOCP:1999}). Firstly, we decompose the singular
kernel into a short range, singular at  the origin, and a nonsingular, long-range contributions.
\begin{equation}
 \frac{1}{r^2} = \frac{e^{-r^2/w_0^2}}{r^2} +  \frac{1-e^{-r^2/w_0^2}}{r^2}
\end{equation}
The long-range part can be used directly as it is in a real space, whereas the short-range part ($\frac{e^{-r^2/w_0^2}}{r^2}$)
 will be treated in the Fourier domain.
In our numerics, the parameter $w_0$ was chosen to be $w_0=7 \Delta x/2\pi$, with $\Delta x$ the step size of the spatial grid.
In Fourier domain, the product of two functions becomes a convolution, that is in particular also well
defined for certain singular functions such as $1/r^\alpha$ with $\alpha<n$, if the other function is ``well behaved``.
Hence, in Fourier domain the short ranged contribution can be
easily calculated, and the limit $|k|\rightarrow 0$ is then well defined.
\section*{References}
\bibliographystyle{unsrt}
\bibliography{collapse}

\end{document}